\newcommand{\R}{\mathbb{R}}
\newcommand{\N}{\mathbb{N}}
\newcommand{\Q}{\mathbb{Q}}
\newcommand{\K}{\mathbb{K}}
\newcommand{\infnorm}[1]{\left\lVert{#1}\right\rVert}
\newcommand{\sigmap}[1]{{\Sigma{#1}}}
\newcommand{\IntI}{\operatorname{Int}}
\newcommand{\poly}{\operatorname{poly}}
\numberwithin{figure}{section}
\newtheorem{definition}{Definition}
\newtheorem{theorem}[definition]{Theorem}
\newtheorem{proposition}[definition]{Proposition}
\newtheorem{lemma}[definition]{Lemma}
\let\c@algorithm\c@definition
\let\expandafter\oldproof\csname\string\proof\endcsname
\let\oldendproof\endproof
\renewenvironment{proof}[1][\proofname]{%
  \oldproof[\bfseries #1]%
}{\oldendproof}
\title{Computational complexity of solving polynomial
differential equations over unbounded domains with non-rational coefficients}
\author{Amaury Pouly}
\begin{document}

\maketitle

\begin{abstract}
In this note, we extend the result of \cite{PoulyG16} about the complexity of solving polynomial
differential equations over unbounded domains to work with non-rational input.
In order to deal with arbitrary input, we phrase the result in framework of Conputable Analysis
\cite{Ko91}. As a side result, we also get a uniform result about complexity of the
operator, and not just about the solution.
\end{abstract}

The complexity of solving this kind of differential equation has been heavily studied over compact
domains but there are few results over unbounded domains. In \cite{PoulyG16} we
studied the complexity of this problem over unbounded domains and obtained a
bound that involved the length of the solution curve. Unfortunately, the result
was written for rational inputs only. In this note, we extend
it to work with any numbers, in the framework of Computable Analysis. To do so,
we will need to recall a few lemmas and introduce some notation. For any continous
function $y$, define
\[\IntI_y(a,b,\varepsilon)=\int_a^bk\sigmap{p}\max(1,\varepsilon+\infnorm{y(u)})^{k-1}du\]
and
\[\ell_y(a,b)=\int_a^b\sigmap{p}\max(1,\infnorm{y(u)})^kdu.\]
For any multivariate polynomial $p(x)=\sum_{|\alpha|\leqslant k} a_\alpha x^\alpha$,
we call $k$ the degree and denote the sum of the norm of the coefficients by
$\sigmap{p}=\sum_{|\alpha|\leqslant k}\infnorm{a_\alpha}$. Note that a vector
of polynomials can be identified to a vector with vector coefficients (i.e.
$\K^d[\R^n]$ is isomorphic to $(\K[\R^n])^d$) and always make this transformation
implicitly below. For such a polynomial $p$ and $\eta\geqslant0$, we call a
$\eta$-\emph{relative-approximation} of $p$ any polynomial
$\tilde{p}=\sum_{|\alpha|\leqslant k} \tilde{a}_\alpha x^\alpha$ with the same degree
such that $\infnorm{\tilde{a}_\alpha-a_\alpha}\leqslant\eta\infnorm{a_\alpha}$
for all $|\alpha|\leqslant k$. It follows almost by definition that:

\begin{lemma}\label{lem:eta_approx_poly}
If $\tilde{p}$ is a $\eta$-\emph{relative-approximation} of $p\in\R^n[\R^d]$ then
for all $x\in\R^d$ we have $\infnorm{\tilde{p}(x)-p(x)}\leqslant\eta\sigmap{p}\max(1,\infnorm{x})^k$
where $k$ is the degree of $p$.
\end{lemma}

We also recall the following simple lemma about polynomials.

\begin{lemma}[\cite{PoulyG16}]\label{lem:lip_poly}
Let $p\in\R^n[\R^d]$ and $k$ its degree. For all $a,b\in\R^d$ we have
\[\infnorm{p(b)-p(a)}\leqslant k\sigmap{p}\infnorm{b-a}\max(\infnorm{a},\infnorm{b})^{k-1}.\]
\end{lemma}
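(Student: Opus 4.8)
The plan is to peel the estimate down to a single monomial and then to a single variable. Writing $p(x)=\sum_{|\alpha|\leqslant k}a_\alpha x^\alpha$ with $a_\alpha\in\R^n$, the triangle inequality gives
\[\infnorm{p(b)-p(a)}\leqslant\sum_{|\alpha|\leqslant k}\infnorm{a_\alpha}\,|b^\alpha-a^\alpha|,\]
so it is enough to prove, for each multi-index $\alpha$ with $1\leqslant|\alpha|\leqslant k$ (the constant term contributing nothing), the scalar \emph{monomial bound}
\[|b^\alpha-a^\alpha|\leqslant|\alpha|\,\infnorm{b-a}\,\max(\infnorm{a},\infnorm{b})^{|\alpha|-1}.\]
Summing this over $\alpha$, bounding each $|\alpha|$ by $k$ and each $\max(\infnorm{a},\infnorm{b})^{|\alpha|-1}$ by $\max(\infnorm{a},\infnorm{b})^{k-1}$, and factoring out $\sum_\alpha\infnorm{a_\alpha}=\sigmap{p}$ then yields the lemma.

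For the monomial bound I would telescope coordinate by coordinate. Put $M=\max(\infnorm{a},\infnorm{b})$ and write
\[b^\alpha-a^\alpha=\sum_{i=1}^d\Bigl(\prod_{j<i}b_j^{\alpha_j}\Bigr)\bigl(b_i^{\alpha_i}-a_i^{\alpha_i}\bigr)\Bigl(\prod_{j>i}a_j^{\alpha_j}\Bigr).\]
Since $|a_j|,|b_j|\leqslant M$ for every $j$ and all exponents are nonnegative, the head and tail products are bounded in absolute value by $M^{\sum_{j<i}\alpha_j}$ and $M^{\sum_{j>i}\alpha_j}$ respectively; and the elementary identity $b_i^{m}-a_i^{m}=(b_i-a_i)\sum_{r=0}^{m-1}b_i^{r}a_i^{m-1-r}$ gives $|b_i^{\alpha_i}-a_i^{\alpha_i}|\leqslant\alpha_i\,|b_i-a_i|\,M^{\alpha_i-1}\leqslant\alpha_i\,\infnorm{b-a}\,M^{\alpha_i-1}$. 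Multiplying the three factors, the powers of $M$ combine to $M^{|\alpha|-1}$ for every $i$, and summing over $i$ turns the coefficient into $\sum_i\alpha_i=|\alpha|$, which is exactly the claimed bound. (Alternatively one can integrate $\nabla(x^\alpha)$ along the segment $[a,b]$ and estimate the gradient, but the discrete telescoping avoids any differentiability bookkeeping.)

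There is no deep obstacle here; the only point that warrants attention is the exponent bookkeeping in the last step. Replacing $\max(\infnorm{a},\infnorm{b})^{|\alpha|-1}$ by $\max(\infnorm{a},\infnorm{b})^{k-1}$ is immediate when $\max(\infnorm{a},\infnorm{b})\geqslant1$; when it is $<1$ one instead carries $\max(1,\infnorm{a},\infnorm{b})$ through the whole computation, which is harmless and is in any case the shape in which the estimate gets applied later (compare the $\max(1,\cdot)$ factors occurring in $\IntI_y$, in $\ell_y$, and in \lemref{lem:eta_approx_poly}). One should also keep in mind that the argument uses $|b_i-a_i|\leqslant\infnorm{b-a}$ and $|a_j|,|b_j|\leqslant M$, which hold for the sup-norm; with any other fixed norm one simply absorbs the equivalence constants, leaving the structure of the proof unchanged.
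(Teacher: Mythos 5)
Your telescoping decomposition and the per-monomial estimate are exactly right, and this is the standard route for such a Lipschitz bound; there is no issue of approach. But the final passage from $\max(\infnorm{a},\infnorm{b})^{|\alpha|-1}$ to $\max(\infnorm{a},\infnorm{b})^{k-1}$ is not merely delicate when $\max(\infnorm{a},\infnorm{b})<1$ — it is false, and so is the lemma as printed. Concretely, take $d=n=1$, $p(x)=x^2+x$ (so $k=2$ and $\sigmap{p}=2$), $a=0$ and $b=\epsilon$ with $0<\epsilon<1/3$: then $|p(b)-p(a)|=\epsilon+\epsilon^2$, while the claimed bound is $k\sigmap{p}|b-a|\max(|a|,|b|)^{k-1}=4\epsilon^2$, and $\epsilon+\epsilon^2>4\epsilon^2$. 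So you should not present the $\max(1,\cdot)$ version as a cosmetic alternative: it is the only version your argument actually proves, and the only version that is true. Writing $M=\max(1,\infnorm{a},\infnorm{b})$, your monomial bound gives $|b^\alpha-a^\alpha|\leqslant|\alpha|\,\infnorm{b-a}\,M^{|\alpha|-1}$, and since $M\geqslant1$ one may legitimately enlarge $|\alpha|M^{|\alpha|-1}$ to $kM^{k-1}$. The rest of the note is consistent with the corrected form: in the proof of \propref{prop:pivp_divergence} the factor coming from this lemma, $\max(\infnorm{y(u)},\infnorm{\tilde y(u)})^{k-1}$, is only ever upper-bounded, first by $N^{k-1}(u)$ and ultimately by $\max(1,\varepsilon+\infnorm{y(u)})^{k-1}$, so inserting the $\max(1,\cdot)$ at the source changes nothing downstream. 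In short: the proof is correct once one states the lemma with $\max(1,\infnorm{a},\infnorm{b})^{k-1}$, and the displayed statement of \lemref{lem:lip_poly} should be amended accordingly rather than defended as written.
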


We will need to quantity to divergence between two PIVPs with slightly
different initial conditions and errors in the coefficients of the polynomials.

\begin{proposition}\label{prop:pivp_divergence}
Let $I=[a,b]$ be an interval, $p\in\R^n[\R^{n}]$ and $k$ its degree,
$y_0,\tilde{y}_0\in\R^n$ and $\tilde{p}$ a $\eta$-relative-approximation of $p$ for some $\eta\geqslant0$.
Assume that $y,\tilde{y}:I\rightarrow\R^n$ satisfies for all $t\in I$
\[\left\{\begin{array}{@{}r@{}l}y(0)&=y_0\\y'(t)&=p(y(t))\end{array}\right.\qquad
\left\{\begin{array}{@{}r@{}l}\tilde{y}(0)&=\tilde{y}_0\\\tilde{y}'(t)&=\tilde{p}(\tilde{y}(t))\end{array}\right..\]
For any $\varepsilon>0$ and $t\in I$, let
\[
\mu_\varepsilon(t)=\big(\infnorm{\tilde{y}_0-y_0}+\eta\ell_y(a,t)\big)
\exp\left((1+\eta)\IntI_y(a,t,\varepsilon)\right).
\]
If $\mu_\varepsilon(t)<\varepsilon$ then $\infnorm{z(t)-y(t)}\leqslant\mu_\varepsilon(t)$.
Furthermore, if the existence of $\tilde{y}$ is not known, then $\mu_\varepsilon(t)<\varepsilon$
implies that $\tilde{y}$ exists over $[a,b]$.
\end{proposition}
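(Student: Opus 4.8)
\noindent
The plan is a Gr\"onwall-type estimate with one twist: $\ell_y$ and $\IntI_y$ only involve the reference trajectory $y$, so replacing every occurrence of $\tilde y$ by $y$ in the estimates costs some slack, and that slack is exactly the margin $\varepsilon-\mu_\varepsilon(t)>0$ granted by the hypothesis. The argument is therefore a bootstrap: assume a priori that $\infnorm{\tilde y-y}<\varepsilon$ on the interval considered, derive the sharper bound $\mu_\varepsilon$, and then use continuity to show this a priori bound is never attained. To set up, suppose first that $\tilde y$ is defined on $[a,t]$ (and read the initial conditions as imposed at $t=a$), put $\delta(u)=\infnorm{\tilde y(u)-y(u)}$, which is continuous, and subtract the two integral equations to get $\tilde y(u)-y(u)=(\tilde y_0-y_0)+\int_a^u\big(\tilde p(\tilde y(s))-p(y(s))\big)\,ds$, so $\delta(a)=\infnorm{\tilde y_0-y_0}=\mu_\varepsilon(a)$. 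Since $\ell_y(a,\cdot)$ and $\IntI_y(a,\cdot,\varepsilon)$ are nondecreasing, so is $\mu_\varepsilon$; hence $\mu_\varepsilon(\tau)\leqslant\mu_\varepsilon(t)<\varepsilon$ for all $\tau\in[a,t]$, and in particular $\delta(a)<\varepsilon$.

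For the core estimate, fix $\tau\in[a,t]$ and assume $\delta<\varepsilon$ on $[a,\tau]$. Split $\tilde p(\tilde y(s))-p(y(s))=\big(\tilde p(\tilde y(s))-\tilde p(y(s))\big)+\big(\tilde p(y(s))-p(y(s))\big)$. The second bracket is handled by \lemref{lem:eta_approx_poly} at $x=y(s)$ (using that $\tilde p$ has the same degree $k$ as $p$): $\infnorm{\tilde p(y(s))-p(y(s))}\leqslant\eta\sigmap{p}\max(1,\infnorm{y(s)})^k$, whose integral over $[a,\tau]$ is exactly $\eta\ell_y(a,\tau)$. For the first bracket, \lemref{lem:lip_poly} applied to $\tilde p$ gives $\infnorm{\tilde p(\tilde y(s))-\tilde p(y(s))}\leqslant k\sigmap{\tilde p}\,\delta(s)\max(\infnorm{y(s)},\infnorm{\tilde y(s)})^{k-1}$; here $\sigmap{\tilde p}\leqslant(1+\eta)\sigmap{p}$ since $\infnorm{\tilde a_\alpha}\leqslant(1+\eta)\infnorm{a_\alpha}$ for every $\alpha$, and $\max(\infnorm{y(s)},\infnorm{\tilde y(s)})\leqslant\varepsilon+\infnorm{y(s)}$ because $\delta(s)<\varepsilon$, so the first bracket is at most $(1+\eta)\,k\sigmap{p}\max(1,\varepsilon+\infnorm{y(s)})^{k-1}\,\delta(s)$. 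Collecting,
\[\delta(\tau)\leqslant\big(\infnorm{\tilde y_0-y_0}+\eta\ell_y(a,\tau)\big)+(1+\eta)\int_a^\tau k\sigmap{p}\max(1,\varepsilon+\infnorm{y(s)})^{k-1}\,\delta(s)\,ds.\]
Since $\tau\mapsto\infnorm{\tilde y_0-y_0}+\eta\ell_y(a,\tau)$ is nondecreasing, the integral form of Gr\"onwall's inequality gives $\delta(\tau)\leqslant\big(\infnorm{\tilde y_0-y_0}+\eta\ell_y(a,\tau)\big)\exp\big((1+\eta)\IntI_y(a,\tau,\varepsilon)\big)=\mu_\varepsilon(\tau)$.

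To discharge the a priori hypothesis, let $T=\sup\{\tau\in[a,t]:\delta<\varepsilon\text{ on }[a,\tau]\}$. As $\delta$ is continuous and $\delta(a)<\varepsilon$, we have $T>a$, and the previous step applies on $[a,\tau]$ for each $\tau<T$, giving $\delta(\tau)\leqslant\mu_\varepsilon(\tau)\leqslant\mu_\varepsilon(t)<\varepsilon$ there. If $T<t$, then $\tau\uparrow T$ yields $\delta(T)\leqslant\mu_\varepsilon(t)<\varepsilon$, so by continuity $\delta<\varepsilon$ on a neighbourhood of $T$, contradicting the definition of $T$. Hence $T=t$, so $\delta<\varepsilon$ on $[a,t)$, and letting $\tau\uparrow t$ in $\delta(\tau)\leqslant\mu_\varepsilon(\tau)\leqslant\mu_\varepsilon(t)$ gives $\delta(t)\leqslant\mu_\varepsilon(t)$, which is the claimed bound (the $z$ of the statement being $\tilde y$).

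Finally, for the existence claim, drop the assumption that $\tilde y$ is defined on $[a,b]$; as $\tilde p$ is a polynomial, hence locally Lipschitz, $\tilde y$ exists and is unique on some maximal interval meeting $[a,b]$ in a set of the form $[a,b^\ast)$, and $\mu_\varepsilon$ is finite on $[a,b^\ast)$ because $y$ is continuous on the compact set $[a,b]$. The bootstrap above runs verbatim on $[a,b^\ast)$ and gives $\delta<\varepsilon$ there, whence $\infnorm{\tilde y(s)}\leqslant\varepsilon+\infnorm{y(s)}\leqslant\varepsilon+\max_{u\in[a,b]}\infnorm{y(u)}$ for every $s\in[a,b^\ast)$. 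Thus $\tilde y$ remains in a fixed bounded set, so by the standard escape/continuation lemma for smooth vector fields it cannot blow up before $b$: necessarily $b^\ast=b$ and $\tilde y$ extends continuously to $b$. I expect the one real difficulty to be making the bootstrap of the third step watertight: ensuring the a priori bound $\delta<\varepsilon$ is never used at equality, and that the monotonicity of $\mu_\varepsilon$ is invoked precisely where the hypothesis $\mu_\varepsilon(t)<\varepsilon$ must be propagated back along $[a,t]$. The rest is routine estimation via \lemref{lem:eta_approx_poly}, \lemref{lem:lip_poly}, and Gr\"onwall's inequality.
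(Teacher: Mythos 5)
Your proof is correct and follows the paper's own route almost exactly: split $\tilde p(\tilde y)-p(y)$ into a Lipschitz term (Lemma~\ref{lem:lip_poly}) plus an approximation term (Lemma~\ref{lem:eta_approx_poly}), apply Gr\"onwall under the a priori bound $\delta<\varepsilon$, and close the bootstrap by a continuity/maximality argument, then deduce existence from the fact that the bound keeps $\tilde y$ in a compact set. The only cosmetic differences are that you formalize the bootstrap via $T=\sup\{\dots\}$ rather than $t_1=\max\{\dots\}$, and that your existence argument goes through boundedness directly instead of the paper's contradiction at a point where $\psi(d)=\varepsilon$; both are minor reorganizations of the same idea.
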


\begin{proof}
Let $\psi(t)=\infnorm{\tilde{y}(t)-y(t)}$. For any $t\in I$, we have
\[\psi(t)\leqslant\psi(a)+\int_a^t\infnorm{\tilde{p}(\tilde{y}(u))-p(y(u))}du.\]
Note that $\sigmap{\tilde{p}}\leqslant(1+\eta)\sigmap{p}$ and apply Lemmas~\ref{lem:eta_approx_poly}
and~\ref{lem:lip_poly} to get, for
$N(u)=\infnorm{y(u)}+\psi(u)$, that
\[\infnorm{\tilde{p}(\tilde{y}(u))-p(y(u))}\leqslant\eta\sigmap{p}\max(1,\infnorm{y(u)})^k+ k(1+\eta)\sigmap{p}N^{k-1}(u)\psi(u).\]
Putting everything together, we have
\[\psi(t)\leqslant\psi(a)+\int_a^t\eta\sigmap{p}\max(1,\infnorm{y(u)})^kdu
+\int_a^t(1+\eta)k\sigmap{p}N^{k-1}(u)\psi(u)du.\]
Apply the Generalized Gronwall's Inequality, using that the integral of non-negative
values is non-decreasing, to get
\[\psi(t)\leqslant
\left(\infnorm{\tilde{y}_0-y_0}+\int_a^t\eta\sigmap{p}\max(1,\infnorm{y(u)})^kdu\right)
\exp\left(\int_a^t(1+\eta)k\sigmap{p}N^{k-1}(u)du\right).\]

Define $t_1=\max\big\{t\in I\thinspace|\thinspace \forall u\in[a,t], \psi(u)\leqslant\varepsilon\big\}$
which is well-defined as the maximum of a closed and non-empty set ($a$ belongs to it).
Then for all $t\in[0,t_1]$, $N(t)\leqslant\infnorm{y(t)}+\varepsilon$ and thus:
\begin{align*}
\psi(t)
    &\leqslant\left(\infnorm{\tilde{y}_0-y_0}+\int_a^t\eta\sigmap{p}\max(1,\infnorm{y(u)})^{k-1}du\right)
        \exp\left(\int_a^t(1+\eta)k\sigmap{p}(\infnorm{y(u)}+\varepsilon)^{k-1} du\right)\\
    &\leqslant\big(\infnorm{\tilde{y}_0-y_0}+\eta\ell_y(0,t)\big)
        \exp\left((1+\eta)\IntI_y(0,t,\varepsilon)\right)\\
    &\leqslant\mu_\varepsilon(t).
\end{align*}
We will show by contradiction that $t_1=b$, which proves the result. Assume by
contradiction that $t_1<b$. Then by continuity of $\psi$ and because $\psi(a)=\mu(a)<\varepsilon$,
there exists $t_0\leqslant t_1$ such that $\psi(t_0)=\varepsilon$. But
then $t_0\in[0,t_1]$ so $\psi(t_0)\leqslant\mu(t)<\varepsilon$ by hypothesis, which is impossible.

To show the existence, assume by contradiction $\tilde{y}$ does not exists over $[a,b]$.
Apply Cauchy-Lipschitz theorem to get a maximal solution $\tilde{y}$ that exists over $[a,c[$
but not $[a,c]$ where $c\in[a,b]$. It is a well-known fact that $\infnorm{\tilde{y}(t)}\rightarrow+\infty$
as $t\rightarrow c$. Since $[a,b]$ is compact, $y$ is bounded over $[a,b]$.
It follows that $\infnorm{\tilde{y}(t)-y(t)}\rightarrow+\infty$ as $t\rightarrow c$.
Thus by continuity, there exists $d\in[a,c[$ such that $\infnorm{\tilde{y}(d)-y(d)}=\varepsilon$.
But then $\tilde{y}$ exists over $[a,d]$ so we can apply the above reasoning over $[a,d]$
to get that $\infnorm{\tilde{y}(d)-y(d)}\leqslant\mu_\varepsilon(d)$ since
$\mu_\varepsilon(d)\leqslant\mu_\varepsilon(b)<\varepsilon$.
It follows that $\infnorm{\tilde{y}(d)-y(d)}<\varepsilon$ which is impossible.
\end{proof}

We will need a result on the growth of the PIVP that only involves the initial condition.

\begin{proposition}\label{prop:growth_pivp}
Let $I=[a,b]$ be an interval, $p\in\R^n[\R^{n}]$ and $k$ its degree and $y_0\in\R^n$.
Assume that $y:I\rightarrow\R^n$ satisfies for all $t\in I$ that
\[y(a)=y_0\qquad y'(t)=p(y(t)),\]
then
\[\infnorm{y(t)-y(a)}\leqslant\frac{\alpha M|t-a|}{1-M|t-a|}\]
for every $t$ such that $M|t-a|<1$
where $M=(k-1)\sigmap{p}\alpha^{k-1}$ and $\alpha=\max(1,\infnorm{y_0})$.
\end{proposition}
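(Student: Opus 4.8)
The plan is to convert the statement into an integral inequality for $\phi(t)=\infnorm{y(t)-y(a)}$ and then solve that inequality by separation of variables (a Bihari-type comparison rather than Gronwall, since the right-hand side is super-linear in $\phi$). By the time translation $s\mapsto y(s+a)$, and, if necessary, the reflection $s\mapsto y(2a-s)$ — which solves the PIVP with polynomial $-p$, and $\sigmap{-p}=\sigmap{p}$ with $\alpha$ unchanged — it suffices to treat the case $a=0$, $t\geqslant 0$, with $|t-a|$ becoming $t$. From $y(t)-y_0=\int_0^t p(y(u))\,du$, the elementary estimate $\infnorm{p(x)}\leqslant\sigmap{p}\max(1,\infnorm{x})^k$ (triangle inequality on the monomials), and $\infnorm{y(u)}\leqslant\infnorm{y_0}+\phi(u)\leqslant\alpha+\phi(u)$ together with $\alpha\geqslant 1$, one obtains
\[\phi(t)\;\leqslant\;\int_0^t\sigmap{p}\bigl(\alpha+\phi(u)\bigr)^k\,du.\]

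Next I would introduce the auxiliary function $v(t)=\alpha+\int_0^t\sigmap{p}(\alpha+\phi(u))^k\,du$. Since $y$, hence $\phi$, is continuous on $I$, $v$ is well defined and $C^1$ on $I$, with $v(0)=\alpha$, $v(t)\geqslant\alpha+\phi(t)\geqslant\alpha>0$, and $v'(t)=\sigmap{p}(\alpha+\phi(t))^k\leqslant\sigmap{p}\,v(t)^k$. Assume $k\geqslant 2$ (the relevant case; for $k\leqslant 1$ the constant $M$ vanishes and the estimate is treated separately). Dividing by $v(t)^k>0$ gives $\frac{d}{dt}\!\left(\frac{v(t)^{1-k}}{1-k}\right)=v(t)^{-k}v'(t)\leqslant\sigmap{p}$, and integrating over $[0,t]$ and multiplying by $1-k<0$ yields
\[v(t)^{1-k}\;\geqslant\;\alpha^{1-k}+(1-k)\sigmap{p}\,t\;=\;\alpha^{1-k}\bigl(1-(k-1)\sigmap{p}\alpha^{k-1}t\bigr)\;=\;\alpha^{1-k}(1-Mt).\]
Whenever $Mt<1$ the right-hand side is positive, so raising both sides to the power $1/(1-k)<0$ (which reverses the inequality) gives $v(t)\leqslant\alpha(1-Mt)^{-1/(k-1)}$.

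Finally, $\phi(t)\leqslant v(t)-\alpha\leqslant\alpha\bigl((1-Mt)^{-1/(k-1)}-1\bigr)$, and since $0<1-Mt\leqslant 1$ and $k\geqslant 2$ we have $(1-Mt)^{(k-2)/(k-1)}\leqslant 1$, i.e. $(1-Mt)^{-1/(k-1)}\leqslant(1-Mt)^{-1}$, whence
\[\phi(t)\;\leqslant\;\alpha\bigl((1-Mt)^{-1}-1\bigr)\;=\;\frac{\alpha Mt}{1-Mt},\]
which is the claim after undoing the time translation/reflection. The only real obstacle is getting the constant exactly right: Gronwall does not apply to the super-linear inequality, so one must run the separable comparison and check that the quantity coming out of the integration is precisely $M=(k-1)\sigmap{p}\alpha^{k-1}$, and then control the auxiliary bound $\alpha(1-Mt)^{-1/(k-1)}$ by the rational bound $\alpha/(1-Mt)$ — for $k=2$ the two coincide and the estimate is tight, while for $k\geqslant 3$ the monotonicity of $s\mapsto s^{\theta}$ on $(0,1]$ with $\theta=(k-2)/(k-1)\geqslant 0$ closes the gap.
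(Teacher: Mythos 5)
Your argument is correct for the substantive case $k\geqslant 2$: the reduction to $a=0$, $t\geqslant 0$ by translation and reflection is sound (indeed $\Sigma(-p)=\Sigma p$), the integral inequality $\phi(t)\leqslant\int_0^t\Sigma p\,(\alpha+\phi(u))^k\,du$ follows correctly from $\infnorm{p(x)}\leqslant\Sigma p\max(1,\infnorm{x})^k$ and $\alpha\geqslant 1$, the auxiliary function $v$ dominates $\alpha+\phi$, the separable comparison integrates exactly to $v(t)^{1-k}\geqslant\alpha^{1-k}(1-Mt)$ with the stated $M$, and the final weakening $(1-Mt)^{-1/(k-1)}\leqslant(1-Mt)^{-1}$ on $(0,1]$ is a correct monotonicity argument. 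Your route is, however, genuinely different from the paper's: the paper does not prove the proposition directly at all but simply cites Theorem~5 of \cite{PoulyG16} (a Taylor-approximation bound for PIVPs, itself restated from \cite{WWSPC06}), where the bound arises from summing a geometric series of Taylor-coefficient estimates for the solution. Your self-contained Bihari-type (super-linear Gronwall) argument is more elementary and transparent, makes the origin of the exact constant $M=(k-1)\Sigma p\,\alpha^{k-1}$ visible, and avoids invoking any machinery about the analyticity of PIVP solutions; the price is that it only directly yields the bound for $k\geqslant 2$. One small caveat worth tightening: for $k=1$ the constant $M$ is $0$ (so the claimed bound degenerates to $\phi\equiv 0$, which is not what a first-order linear growth argument gives), and for $k=0$ one has $M<0$, so the statement as written is only meaningful for $k\geqslant 2$ --- you should either say this explicitly or observe that $k$ may be taken to be any upper bound on the degree and so may be assumed $\geqslant 2$ without loss of generality.
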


\begin{proof}This is a consequence of Theorem~5 (Taylor approximation for PIVP) in \cite{PoulyG16},
restating an original result in \cite{WWSPC06}.
\end{proof}

We now recall the complexity result in \cite{PoulyG16}. For reasons that will
appear later, we will use the algorithm with ``hint'' rather than the full algorithm.

\begin{theorem}[Solving PIVPs with hint, \cite{PoulyG16}]\label{th:solving_pivp_hint}
There exists an algorithm $\mathcal{A}$ such that the following holds.
Let $a,b\in\Q$, $p\in\Q^n[\R^{n}]$ and $k$ its degree and $y_0\in\Q^n$.
Assume that $y:[a,b]\rightarrow\R^n$ satisfies for all $t\in[a,b]$ that
\[y(a)=y_0\qquad y'(t)=p(y(t)).\]
Let $I,\varepsilon\in\Q$ and $x=\mathcal{A}(a,y_0,p,b,\varepsilon,I)$, then
\begin{itemize}
\item either $x=\bot$ or $\infnorm{y(b)-x}\leqslant\varepsilon$,
\item if $I\geqslant 6\IntI_y(a,b,\varepsilon)$ then $x\neq\bot$,
\item if $I<\IntI_y(a,b,\varepsilon)$ then $x=\bot$,
\item the algorithm computes $x$ in time bounded in by
\[\poly\big(k,I,\log\ell_y(a,b),\log\infnorm{y_0},\log\sigmap{p},-\log\varepsilon\big)^n.\]
\end{itemize}
\end{theorem}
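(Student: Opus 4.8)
The plan is to follow the adaptive high-order Taylor method of \cite{PoulyG16}, so I only describe its structure. Since $y'=p(y)$, the Taylor coefficients of the local solution at a point $t_i$ satisfy an explicit polynomial recurrence obtained from $p$ by automatic differentiation; hence, from a rational approximation of $y(t_i)$, the order-$m$ truncated Taylor polynomial at $t_i$ can be computed with $\poly(m,n,k)$ arithmetic operations, each involving polynomials with at most $\binom{n+k}{k}\le(k+1)^n$ monomials --- this last bound is the source of the exponent $n$ in the stated complexity. One step of the algorithm consists in evaluating this truncated polynomial at a step size $h_i$ to move from $t_i$ to $t_{i+1}=t_i+h_i$, after rounding every computed rational to $B$ bits; the integers $m$ and $B$ are parameters fixed below.

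The step size must be small enough that both the order-$m$ truncation error on $[t_i,t_{i+1}]$ is controlled and the solution grows by at most a constant factor on the step. Proposition~\ref{prop:growth_pivp}, applied with the current value in place of $y_0$, shows that $h_i$ of order $1/\big(k\,\sigmap{p}\max(1,\infnorm{y(t_i)})^{k-1}\big)$ achieves both, provided $m=\Theta(B)$. The decisive consequence is that the number of steps needed to cross $[a,b]$ is then, up to absolute constants, $\int_a^b k\,\sigmap{p}\max(1,\infnorm{y(u)})^{k-1}\,du$, which up to a further constant factor (coming from replacing $\infnorm{y}$ by $\varepsilon+\infnorm{y}$, legitimate while the running error stays below $\varepsilon$) is $\IntI_y(a,b,\varepsilon)$. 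The algorithm therefore treats the hint $I$ as a step budget: it performs at most $O(I)$ steps, outputs the final rounded value if it reaches $b$, and outputs $\bot$ otherwise; the constants in the step-size rule are calibrated so that $I\geqslant 6\IntI_y(a,b,\varepsilon)$ leaves enough budget to finish while $I<\IntI_y(a,b,\varepsilon)$ does not, giving the two hint clauses.

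For correctness when $x\neq\bot$, I would invoke Proposition~\ref{prop:pivp_divergence}: rounding at each step amounts to a relative perturbation $\eta=2^{-B}$ of the vector field plus an $O(2^{-B})$ perturbation of that step's initial condition, on top of an $O(2^{-m})$ Taylor remainder; summing these contributions over all steps and applying the Gronwall bound of Proposition~\ref{prop:pivp_divergence} yields a global error at $b$ of order $\big(2^{-B}(1+\ell_y(a,b))\big)\exp\big(2\,\IntI_y(a,b,\varepsilon)\big)$. Choosing $m=B=\Theta\big(k+I+\log\ell_y(a,b)+\log\infnorm{y_0}+\log\sigmap{p}-\log\varepsilon\big)$ makes this at most $\varepsilon$ and, by the ``furthermore'' clause of Proposition~\ref{prop:pivp_divergence}, also forces $\tilde y$ (hence the whole computation) to be well-defined on $[a,b]$. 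The same two propositions bound every intermediate value by $\exp(\poly(\cdots))$ and every step size from below by $\exp(-\poly(\cdots))$, so all rationals handled have bit-size $\poly(\cdots)$; multiplying the per-step cost $\poly(m,n,k)\cdot(k+1)^n\cdot\poly(B)$ by the $O(I)$ steps gives the claimed bound $\poly\big(k,I,\log\ell_y(a,b),\log\infnorm{y_0},\log\sigmap{p},-\log\varepsilon\big)^n$.

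The genuinely delicate point --- and the reason this is more than the bookkeeping above --- is that the algorithm knows neither $y$ nor $\IntI_y$: the admissible step size $h_i$ has to be read off from the computed approximation $\tilde y(t_i)$, whose accuracy in turn depends on all earlier step sizes having been chosen well. Breaking this circularity requires a bootstrap argument: one proves by induction on $i$ that, as long as the accumulated error is below $\varepsilon$, the value $\tilde y(t_i)$ is within a constant factor of $\infnorm{y(t_i)}$, so the step size derived from it is within a constant factor of the ideal one, which in turn keeps the error below $\varepsilon$ at step $i+1$. Carrying out this induction while tracking constants sharply enough to justify the specific thresholds $6$ and $1$ in the two hint clauses is where I expect the main work to lie; everything else is the routine Taylor-method error analysis and bit-length accounting sketched above.
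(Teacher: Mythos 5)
The paper's own ``proof'' of this theorem is purely a citation: it points to Lemma~10 of \cite{PoulyG16} for the first three clauses and to Lemma~14 for the complexity bound, so there is no internal argument here for your sketch to be measured against --- the theorem is imported as a black box. That said, your reconstruction of what \cite{PoulyG16} actually does is essentially faithful: the adaptive high-order Taylor stepper with step size of order $1/\bigl(k\sigmap{p}\max(1,\infnorm{y})^{k-1}\bigr)$, the reading of the hint $I$ as a budget on the number of steps (and hence the dichotomy between the two hint clauses), and the bootstrap induction needed to break the circularity between the chosen step sizes and the accumulated error are all genuine features of the SolvePIVPVariable algorithm and its analysis in that paper, and the $\binom{n+k}{k}\leqslant(k+1)^n$ monomial count is indeed where the exponent $n$ comes from.

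One technical caution on your correctness step: Proposition~\ref{prop:pivp_divergence} compares two exact ODE flows, not an exact flow against a discrete Taylor iteration, so it cannot be invoked in one shot to control the global error of the numerical scheme. To make your ``sum the per-step contributions and apply Gronwall'' passage rigorous you would have to recast it as a chain of local comparisons --- the exact flow started from the rounded value at $t_i$ versus the exact flow started from the true value at $t_i$, plus a separate Taylor-remainder term per step --- and then telescope the per-step errors with a discrete Gronwall-type bound. That is precisely the shape of the accumulation argument carried out in Lemma~10 of \cite{PoulyG16}. It does not change your plan, only the form in which the continuous-time sensitivity estimate is deployed, but as written the invocation of Proposition~\ref{prop:pivp_divergence} glosses over the discrete-versus-continuous mismatch.
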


\begin{proof}
This is a consequence of various results in \cite{PoulyG16}.
The first two points follows from Lemma~10 (Algorithm is correct) and the third
one follows from the proof of Lemma~10 (but is not stated in the Lemma itself).
The fourth point is a consequence of Lemma 14~(Complexity of SolvePIVPVariable).
\end{proof}

For technical reasons, the previous lemma is not entirely satisfactory because the hint
$I$ is related to $\IntI_y$ but we would prefer that it relates to $\ell_y$. This
is possible thanks to a small trick.

\begin{lemma}\label{lem:solving_pivp_hint_len}
There exists an algorithm $\mathcal{B}$ such that the following holds.
Let $a,b\in\Q$, $p\in\Q^n[\R^{n}]$ and $k$ its degree and $y_0\in\Q^n$.
Assume that $y:[a,b]\rightarrow\R^n$ satisfies for all $t\in[a,b]$ that
\[y(0)=y_0\qquad y'(t)=p(y(t)).\]
Let $L,\varepsilon\in\Q$ and $x=\mathcal{B}(a,y_0,p,b,\varepsilon,L)$, then
\begin{itemize}
\item either $x=\bot$ or $\infnorm{y(b)-x}\leqslant\varepsilon$,
\item if $L\geqslant 12(k+1)\ell_y(a,b)$ then $x\neq\bot$,
\item if $L<\ell_y(a,b)$ then $x=\bot$,
\item the algorithm computes $x$ in time bounded in by
\[\poly\big(k,L,\log\ell_y(a,b),\log\infnorm{y_0},\log\sigmap{p},-\log\varepsilon\big)^n.\]
\end{itemize}
Furthermore, even if there no solution $y$ to the system over $[a,b]$, the algorithm 
always returns $\bot$ in time bounded by
\[\poly\big(k,L,\log\infnorm{y_0},\log\sigmap{p},-\log\varepsilon\big)^n.\]
\end{lemma}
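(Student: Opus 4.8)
The plan is to reduce the statement to \thref{th:solving_pivp_hint} by a small ``degree--bumping'' trick: the algorithm $\mathcal{B}$ will call the algorithm $\mathcal{A}$ of \thref{th:solving_pivp_hint} on a harmless reinterpretation $\hat p$ of $p$ for which the quantity $\IntI$ is two-sided comparable to $\ell_y(a,b)$, so that an $\ell_y$-scaled hint plays the role of an $\IntI$-scaled hint. If $\degp{p}=0$ then $p$ is constant, $y(t)=y_0+(t-a)\,p$, and $\ell_y(a,b)=\sigmap{p}\,(b-a)$, so one answers directly (computing $y(b)$ exactly and comparing $L$ with $\ell_y(a,b)$); assume therefore $k\geqslant 1$. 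Let $\hat p$ be $p$ regarded as a polynomial of degree $k+1$ (add a zero coefficient for each monomial of degree $k+1$); then $\sigmap{\hat p}=\sigmap{p}$ and $\hat p=p$ pointwise, so the same curve $y$ solves $y'=\hat p(y)$, $y(a)=y_0$. Define $\varepsilon'=\min(\varepsilon,\tfrac1k)$ and let $\mathcal{B}(a,y_0,p,b,\varepsilon,L)$ return $\mathcal{A}(a,y_0,\hat p,b,\varepsilon',2L)$.

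The heart of the argument is the elementary two-sided bound
\[(k+1)\,\ell_y(a,b)\ \leqslant\ \IntI^{\hat p}_y(a,b,\varepsilon')\ \leqslant\ e\,(k+1)\,\ell_y(a,b),\]
where $\IntI^{\hat p}_y(a,b,\varepsilon')=(k+1)\int_a^b\sigmap{p}\max(1,\varepsilon'+\infnorm{y(u)})^{k}\,du$ is the quantity $\IntI$ of \thref{th:solving_pivp_hint} instantiated at $\hat p$. The left inequality is immediate since $\max(1,\varepsilon'+t)\geqslant\max(1,t)$. For the right one, $\varepsilon'\leqslant\tfrac1k$ gives $\max(1,\varepsilon'+t)\leqslant(1+\tfrac1k)\max(1,t)$ for every $t\geqslant0$ (check $t\geqslant1$ and $t<1$ separately), hence $\max(1,\varepsilon'+t)^{k}\leqslant(1+\tfrac1k)^{k}\max(1,t)^{k}\leqslant e\max(1,t)^{k}$; integrating against $\sigmap{p}$ yields the bound. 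This is the only place where the choice $\varepsilon'\leqslant\tfrac1k$ is used, and it is what makes the $\ell_y$-hint legitimate.

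With this in hand the four items follow mechanically. Item~1 is inherited from \thref{th:solving_pivp_hint} because $\varepsilon'\leqslant\varepsilon$. If $L\geqslant 12(k+1)\ell_y(a,b)$ then the hint is $2L\geqslant 24(k+1)\ell_y(a,b)\geqslant 6e(k+1)\ell_y(a,b)\geqslant 6\,\IntI^{\hat p}_y(a,b,\varepsilon')$, so $\mathcal{A}$ returns $x\neq\bot$; if $L<\ell_y(a,b)$ then $2L<2\ell_y(a,b)\leqslant(k+1)\ell_y(a,b)\leqslant\IntI^{\hat p}_y(a,b,\varepsilon')$, so $\mathcal{A}$ returns $\bot$. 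For the running time, \thref{th:solving_pivp_hint} gives $\poly(k+1,2L,\log\ell^{\hat p}_y(a,b),\log\infnorm{y_0},\log\sigmap{p},-\log\varepsilon')^{n}$, where $\ell^{\hat p}_y(a,b)=\int_a^b\sigmap{p}\max(1,\infnorm{y(u)})^{k+1}\,du$; since $\infnorm{y(u)}\leqslant\infnorm{y_0}+\int_a^u\infnorm{y'}\leqslant\infnorm{y_0}+\ell_y(a,b)$ on $[a,b]$, we get $\ell^{\hat p}_y(a,b)\leqslant(1+\infnorm{y_0}+\ell_y(a,b))\,\ell_y(a,b)$, so $\log\ell^{\hat p}_y(a,b)$ is polynomial in $\log\infnorm{y_0}$ and $\log\ell_y(a,b)$, and likewise $-\log\varepsilon'=\max(-\log\varepsilon,\log k)$; the bound then collapses to the one claimed.

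The part I expect to be the real obstacle is the final sentence: when no solution $y$ exists on $[a,b]$, $\mathcal{B}$ must still output $\bot$, and within a bound that no longer mentions $\ell_y$. For this I would use the internal workings of $\mathcal{A}$ in \cite{PoulyG16} rather than its black-box statement: with a finite hint the algorithm carries out a bounded adaptive integration whose cost depends only on $k$, the hint, $-\log\varepsilon$ and the bit-sizes of $a,b,y_0,p$, and it emits $\bot$ as soon as the numerical trajectory leaves the region certified by the hint. When the maximal solution is not continuable up to $b$ it blows up ($\infnorm{y(t)}\to+\infty$), so no finite hint can be consistent with global existence and $\bot$ is returned; \propref{prop:growth_pivp} and \propref{prop:pivp_divergence} make the blow-up and the certified region quantitative. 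Hence $\mathcal{B}$ returns $\bot$ in time $\poly(k,L,\log\infnorm{y_0},\log\sigmap{p},-\log\varepsilon)^{n}$, completing the proof.
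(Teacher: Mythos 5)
Your overall strategy is the same as the paper's---artificially raise the degree by one so that the $\IntI$-scaled hint of \thref{th:solving_pivp_hint} translates into an $\ell_y$-scaled hint---and your arithmetic relating $\IntI$ and $\ell_y$ is fine. But the way you implement the degree bump has a soundness problem, and the last claim of the lemma is not actually proved.

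On the degree bump: you pad $p$ with zero coefficients of total degree $k+1$ and call the result $\hat p$, then invoke \thref{th:solving_pivp_hint} ``with $k$ replaced by $k+1$.'' Mathematically, however, $\hat p=p$ and its degree is still $k$; the theorem's guarantees (which quantity $\IntI_y$ the hint is compared against, which exponent appears in $\max(1,\varepsilon+\infnorm{y})^{k-1}$) are stated in terms of the \emph{true} degree, not in terms of how many trailing zero coefficients you write down. So the reduction only works if you additionally know that $\mathcal{A}$ derives its notion of degree from the length of the coefficient list rather than from $p$ itself, which is an internal implementation fact you are not entitled to. The paper sidesteps this cleanly: it appends a new variable $z$ with $z(0)=0$ and $z'=\sigmap{p}\,z^{k+1}$, so the augmented polynomial $q(y,z)=(p(y),\sigmap{p}z^{k+1})$ genuinely has degree $k+1$ while $\sigmap{q}=\sigmap{p}$ and $\infnorm{(y,z)}=\infnorm{y}$. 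That is the missing idea. (Adding a variable pushes the exponent from $n$ to $n+1$ in the complexity bound, which the paper then absorbs into $\poly(\cdots)^n$.)

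On the final ``furthermore'' clause---$\bot$ in bounded time even when no solution exists over $[a,b]$---you correctly flag it as the real obstacle but then only sketch an argument in vague terms. The paper's proof here is concrete and you would need something equivalent: (i) observe that $\mathcal{A}$'s computation does not depend on $b$ except as a stopping condition, so for $b'<b$, if $\mathcal{A}(a,z_0,q,b',\varepsilon,L)=\bot$ then $\mathcal{A}(a,z_0,q,b,\varepsilon,L)=\bot$ with the same running time; (ii) by Cauchy--Lipschitz the maximal solution blows up at some $c\in\,]a,b]$, and since $\infnorm{y(t)}\leqslant\infnorm{y_0}+\ell_y(a,t)$ this forces $\ell_y(a,t)\to\infty$, so there is a $b'<c$ with $\ell_y(a,b')=L+1$; (iii) apply the third bullet at $b'$ to get $\bot$, and apply the complexity bullet at $b'$ where $\log\ell_y(a,b')=\log(L+1)$ is polynomial in $L$. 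Without spelling out steps (i)--(iii) you have not established the time bound $\poly(k,L,\log\infnorm{y_0},\log\sigmap{p},-\log\varepsilon)^n$, which is precisely what makes the search loop in \thref{th:pivp_comp_analysis} terminate in the stated time.

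The rest of your write-up (the two-sided bound on $\IntI^{\hat p}_y$, items 1--4 modulo the above, the bound on $\ell^{\hat p}_y$ via $\infnorm{y(u)}\leqslant\infnorm{y_0}+\ell_y(a,b)$) is correct and matches the paper's calculations once the degree bump is done with the auxiliary variable instead of padding.
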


\begin{proof}
Let $\mathcal{A}$ be the algorithm from Theorem~\ref{th:solving_pivp_hint}. The
hint of $\mathcal{A}$ is related to $\IntI_y$ which contains the integral of $\max(1,\infnorm{y(t)})^{k-1}$.
On the other hand, we would like to related to $\ell_y$ which contains the integral
of $\max(1,\infnorm{y(t)})^{k}$. So if we could increase the degree artifically by one,
without changing the complexity too much, we would almost have what we want.
The idea is to add one component that will always be $0$ but with a polynomial
of degree $k+1$. One possibility is $z'=z^{k+1}$ with $z(0)=0$ but it will be more
convenient to take $z'=\sigmap{p}z^{k+1}$.

Without loss of generality, we assume that $\varepsilon\leqslant\tfrac{1}{4k}$.
Given the hypothesis of the lemma, let
\[z_0=(y_0,0),\qquad q(y,z)=(p(y),\sigmap{p}z^{k+1}).\]
and define
\[\mathcal{B}(a,y_0,p,b,\varepsilon,L)=\mathcal{A}(a,z_0,q,b,\varepsilon,L)_{1..n}.\]
It is clear from the definition that the only solution of
\[z(0)=z_0\qquad z'(t)=q(z(t))\]
is of the form $z(t)=(y(t),0)$.
We will now check that $\mathcal{B}$ satisfies the claim. Let $x=\mathcal{A}(a,y_0,p,b,\varepsilon,L)$.
First, recall that $\sigmap{q}$ is the maximum of all components of $q$, and since
$\sigmap{(z\mapsto\sigmap{p}z^{k+1})}=\sigmap{p}$ we get that $\sigmap{q}=\sigmap{p}$.
Furthermore, $q$ is of degree $k+1$ and $\infnorm{z(t)}=\infnorm{y(t)}$ for all $t\in[a,b]$.
\begin{itemize}
\item By definition of $\mathcal{A}$, either $x=\bot$ (and thus $x_{1..n}=\bot$)
    or $\infnorm{x-z(t)}\leqslant\varepsilon$, but since $z(t)=(y(t),0)$ then $\infnorm{x_{1..n}-y(t)}\leqslant\varepsilon$.
\item If $L\geqslant 12(k+1)\ell_y(a,b)$ then
    \begin{align*}
    6\IntI_z(a,b,\varepsilon)
        &=6\int_a^b(k+1)\sigmap{q}\max(1,\varepsilon+\infnorm{z(u)})^{(k+1)-1}du\\
        &=6(k+1)\int_a^b\sigmap{p}\max(1,\varepsilon+\infnorm{y(u)})^kdu\\
        &\leqslant6(k+1)(1+\varepsilon)^k\int_a^b\sigmap{p}\max(1,\infnorm{y(u)})^kdu\\
        &\leqslant6(k+1)(1+\tfrac{1}{4k})^k\ell_y(a,b)\\
        &\leqslant12(k+1)\ell_y(a,b)\\
        &\leqslant L.
    \end{align*}
    Thus $x\neq\bot$ by Theorem~\ref{th:solving_pivp_hint}.
\item If $L<\ell_y(a,b)$ then
    \begin{align*}
    L
        &<\int_a^b\sigmap{p}\max(1,\infnorm{y(u)})^kdu\\
        &=\int_a^b\sigmap{q}\max(1,\infnorm{z(u)})^kdu\\
        &\leqslant\int_a^b(k+1)\sigmap{q}\max(1,\varepsilon+\infnorm{z(u)})^{(k+1)-1}du\\
        &=\IntI_z(0,t,\varepsilon).
    \end{align*}
    Thus $x=\bot$ by Theorem~\ref{th:solving_pivp_hint}.
\item By Theorem~\ref{th:solving_pivp_hint}, the complexity is bounded by
    \[\poly\big(k+1,L,\log\ell_z(a,b),\log\infnorm{z_0},\log\sigmap{q},-\log\varepsilon\big)^{n+1}.\]
    Recall that for any $t\in[a,b]$ we have
    \begin{align*}
    \infnorm{y(t)}
        &\leqslant\infnorm{y_0}+\int_0^t\infnorm{y'(u)}du\\
        &=\infnorm{y_0}+\int_0^t\infnorm{p(y(u))}du\\
        &\leqslant\infnorm{y_0}+\int_0^t\sigmap{p}\max(1,\infnorm{y(u)})^kdu\\
        &=\infnorm{y_0}+\ell_y(0,t)\\
        \leqslant\infnorm{y_0}+\ell_y(0,b).
    \end{align*}
    Thus
    \begin{align*}
    \ell_z(a,b)
        &=\int_a^b\sigmap{q}\max(1,\infnorm{z(u)})^{k+1}du\\
        &=\int_a^b\sigmap{p}\max(1,\infnorm{y(u)})^{k+1}du\\
        &\leqslant\max(1,\infnorm{y_0}+\ell_y(a,b))\int_a^b\sigmap{p}\max(1,\infnorm{y(u)})^kdu\\
        &\leqslant\max(1,\infnorm{y_0}+\ell_y(a,b))\ell_y(a,b).
    \end{align*}
    It follows that the complexity is bounded by
    \[\poly\big(k,L,\log\ell_y(a,b),\log\infnorm{y_0},\log\sigmap{p},-\log\varepsilon\big)^n.\]
\end{itemize}

The extra statement is a consequence of two facts. First, disregarding the existence
or not of $y$, if $b'<b$ and
$\mathcal{A}(a,z_0,q,b',\varepsilon,L)=\bot$ then $\mathcal{A}(a,z_0,q,b,\varepsilon,L)=\bot$.
This is a consequence of the fact that the algorithm does not use $b$ in any intermediate
computation except to check if it has reached time $b$. In other words, the algorithm
will perform exactly the same on the two instances and thus return $\bot$ in both.
We refer the reader to Algorithm~11 in \cite{PoulyG16} to check the details of this claim.
Furthermore, it follows from this that the running of the algorithm on both
instances is the same (they execute exactly the same number of instructions).

Second, by the Cauchy-Lipschitz theorem, there exists a maximal solution $y$ whose domain
is open and contains a neighbourhood of $a$. Thus there exists a $c\in]a,b]$ such
that $y$ is defined over $[a,c[$ but not in $c$. It is a well-known fact that
$\infnorm{y(t)}\rightarrow+\infty$ as $t\rightarrow c$. Since,
as we saw above, $\infnorm{y(t)}\leqslant\infnorm{y_0}+\ell_y(0,t)$,
it follows that $\ell_y(a,t)\rightarrow+\infty$ as $t\rightarrow c$. Thus by continuity,
there exists $b'\in[a,c[$ such that $\ell_y(a,b')=L+1$. But then, by
the third point above (and since $y$ exists over $[a,b']$),
\[\mathcal{A}(a,z_0,q,b',\varepsilon,L)=\bot.\]
And since $b'<b$, it follows that $\mathcal{A}(a,z_0,q,b,\varepsilon,L)=\bot$
by the claim above. Furthermore, since we saw earlier that the complexity of both
instances is the same, it follows that it returns $\bot$ in time bounded by
\[\poly\big(k,L,\log\ell_y(a,b'),\log\infnorm{y_0},\log\sigmap{p},-\log\varepsilon\big)^n.\]
which satisfies the claim since $\ell_y(a,b')=L+1$.
\end{proof}

We are now ready to state and prove a result about the complexity of solving PIVPs
for any inputs.

\begin{theorem}[Complexity of Solving PIVPs]\label{th:pivp_comp_analysis}
Let $I=[a,b]$ be an interval, $p\in\R^d[\R^{d}]$ and $k$ its degree and $y_0\in\R^d$.
Assume that $y:I\rightarrow\R^d$ satisfies for all $t\in I$ that
\begin{equation}\label{eq:ode}
y(a)=y_0\qquad y'(t)=p(y(t)),
\end{equation}
then $y(b)$ can be computed with precision $2^{-\mu}$ in time bounded by
\begin{equation}\label{eq:pivp_comp_analysis_bound}
\poly(k,\ell_y(a,b),\log\infnorm{y_0},\log\sigmap{p},\mu)^d.
\end{equation}
More precisely, there exists a Turing machine $\mathcal{M}$ such that for any oracle
$\mathcal{O}$ representing\footnote{See \cite{Ko91} for more details. In short, the machine can ask arbitrary approximation
of $a, y_0, p$ and $b$ to the oracle. The polynomial is represented by the finite list of coefficients.} $(a,y_0,p,b)$ and any $\mu\in\N$,
$\infnorm{\mathcal{M}^\mathcal{O}(\mu)-y(b)}\leqslant2^{-\mu}$
where $y$ satisfies \eqref{eq:ode}, and the number of steps of the machine is bounded by \eqref{eq:pivp_comp_analysis_bound}
for all such oracles.
\end{theorem}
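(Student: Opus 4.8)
The plan is to reduce the Computable-Analysis statement to the rational-input algorithm $\mathcal B$ of \lemref{lem:solving_pivp_hint_len} by a standard "guess the length, then run with rational approximations" strategy. First I would have the machine $\mathcal M$ query the oracle to obtain rational approximations $\tilde a,\tilde b,\tilde y_0,\tilde p$ of $a,b,y_0,p$ to some precision $2^{-m}$ that will be fixed later (as a polynomial in $\mu$ and the size parameters), so that $\tilde p$ is an $\eta$-relative-approximation of $p$ with $\eta$ as small as needed and $\infnorm{\tilde y_0-y_0}$, $|\tilde a-a|$, $|\tilde b-b|$ are all at most $2^{-m}$. Since the true length $\ell_y(a,b)$ is not known in advance, $\mathcal M$ runs $\mathcal B$ with the hint parameter $L=1,2,4,8,\ldots$ doubling each time, and with a target error $\varepsilon$ that is itself a suitable rational (something like $2^{-\mu-2}$, possibly shrunk depending on the divergence estimate). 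By the three bullet points of \lemref{lem:solving_pivp_hint_len}, as soon as $L$ reaches a value $\geqslant 12(k+1)\ell_{\tilde y}(\tilde a,\tilde b)$ — where $\tilde y$ is the solution of the perturbed system — the call returns a non-$\bot$ answer, and whenever it returns non-$\bot$ we know $L\geqslant \ell_{\tilde y}(\tilde a,\tilde b)$ and the returned value is within $\varepsilon$ of $\tilde y(\tilde b)$; the "furthermore" clause guarantees each $\bot$-call also terminates quickly, so the doubling loop is well defined and terminates.

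The correctness argument then combines three error contributions to bound $\infnorm{\mathcal M^{\mathcal O}(\mu)-y(b)}$: (i) the algorithmic error $\varepsilon$ between the output and $\tilde y(\tilde b)$; (ii) the divergence $\infnorm{\tilde y(\tilde b)-y(\tilde b)}$ between the perturbed and exact trajectories at the same time, controlled by \propref{prop:pivp_divergence} with $\mu_\varepsilon(\tilde b)\leqslant\big(\infnorm{\tilde y_0-y_0}+\eta\,\ell_y(a,\tilde b)\big)\exp\!\big((1+\eta)\IntI_y(a,\tilde b,\varepsilon)\big)$, which I can make $\leqslant 2^{-\mu-2}$ by choosing $m$ large enough — note $\IntI_y$ and $\ell_y$ are bounded by polynomials in $\ell_y(a,b)$, $\log\infnorm{y_0}$, $\log\sigmap p$, so taking $m=\poly(k,\ell_y(a,b),\log\infnorm{y_0},\log\sigmap p,\mu)$ suffices; and (iii) the time-shift error $\infnorm{y(\tilde b)-y(b)}$, bounded via \lemref{lem:lip_poly} applied to $y'=p(y)$ together with the growth bound $\infnorm{y(t)}\leqslant\infnorm{y_0}+\ell_y(a,b)$ shown in the proof of \lemref{lem:solving_pivp_hint_len}, again made $\leqslant 2^{-\mu-2}$ by shrinking $|\tilde b-b|$. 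Summing, $\infnorm{\mathcal M^{\mathcal O}(\mu)-y(b)}\leqslant\varepsilon+2^{-\mu-2}+2^{-\mu-2}\leqslant 2^{-\mu}$.

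For the complexity bound I would argue as follows. The perturbed length $\ell_{\tilde y}(\tilde a,\tilde b)$ differs from $\ell_y(a,b)$ by at most a constant factor (again by \propref{prop:pivp_divergence} and the growth bound, since the perturbations are tiny), so the doubling loop stops at some $L=\bigO(\ell_y(a,b))$ after $\bigO(\log\ell_y(a,b))$ iterations; each call of $\mathcal B$ — whether it returns $\bot$ or not — costs $\poly(k,L,\log\ell_y(a,b),\log\infnorm{y_0},\log\sigmap p,-\log\varepsilon)^n$ by \lemref{lem:solving_pivp_hint_len}, with $n=d+1$ absorbed into the $\poly(\cdots)^d$ form; the oracle queries for $\tilde a,\tilde b,\tilde y_0,\tilde p$ to precision $2^{-m}$ cost $\poly(m)=\poly(k,\ell_y(a,b),\log\infnorm{y_0},\log\sigmap p,\mu)$; and $-\log\varepsilon=\bigO(\mu)$. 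Multiplying the per-iteration cost by the number of iterations and adding the query cost keeps everything within \eqref{eq:pivp_comp_analysis_bound}. The main obstacle I anticipate is the circularity in fixing $m$: the precision $m$ needed to make the divergence small depends on $\ell_y(a,b)$, which is exactly the unknown quantity the doubling loop is discovering — I would resolve this by interleaving, i.e. in the $j$-th iteration of the loop (with $L=2^j$) use precision $m_j=\poly(k,2^j,\log\infnorm{y_0},\log\sigmap p,\mu)$ tied to the current $L$, so that by the time $L$ overshoots $\ell_y(a,b)$ the precision is automatically adequate, and then re-derive the error bound a posteriori using the final (correct-order) value of $L$; a secondary subtlety is checking that $\mathcal B$'s "furthermore" termination guarantee applies uniformly to the perturbed system even before we know it has a solution on all of $[\tilde a,\tilde b]$, which is precisely what that clause of \lemref{lem:solving_pivp_hint_len} was designed to supply.
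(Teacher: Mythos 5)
Your proposal follows essentially the same route as the paper: reduce to the rational-input algorithm $\mathcal{B}$ of \lemref{lem:solving_pivp_hint_len}, search over the hint $L$, tie the oracle-precision $m$ to the current $L$ so the circularity resolves itself, use \propref{prop:pivp_divergence} for the perturbation error, and rely on the ``furthermore'' clause to bound the cost of the $\bot$-returning calls. (The paper takes $L_n=n$ rather than doubling, which does not change the $\poly(\cdot)^d$ bound since the per-call cost already dominates.) Two fine points in your plan need repair, both of which the paper handles explicitly. First, the rational time $\tilde b$ must be chosen with $\tilde b\leqslant b$ (as the paper does with $t^{(n)}\leqslant b$), otherwise neither $y(\tilde b)$ nor the perturbed $\tilde y(\tilde b)$ is guaranteed to exist and your three-term error decomposition breaks at the middle term. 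Second, \lemref{lem:lip_poly} is a Lipschitz bound on $p$ as a map, not a magnitude bound on $p\circ y$, so it does not directly give the time-shift estimate $\infnorm{y(b)-y(\tilde b)}$; you need either the sup-norm bound $\infnorm{p(y(u))}\leqslant\sigmap{p}\max(1,\infnorm{y(u)})^{k}$ integrated over $[\tilde b,b]$, or, as the paper does, \propref{prop:growth_pivp} restarted at $t^{(n)}$.
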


\begin{proof}
Let $\mathcal{B}$ be the algorithm from Lemma~\ref{lem:solving_pivp_hint_len}.
Without loss of generality we assume that $a\in\Q$ (since we can always replace $a$ by $0$
and $b$ by $b-a$).
Let $\mathcal{O}$ be an oracle for $a, y_0, p$ and $b$ (where $p$ is represented
by the finite list of its coefficients) and $\mu$ the input of
the machine. Let $\varepsilon\in\Q$ such that $\varepsilon<e^{-\mu-\ln3}$.
Define, for all $n\in\N$:
\begin{itemize}
\item $L_n=n$,
\item $\nu_n=e^{-4kL_n-\ln2}\varepsilon$,
\item $y_0^{(n)}\in\Q^n$ be such that $\infnorm{y_0^{(n)}-y_0}\leqslant\nu_n$,
\item $\eta_n\in\Q^n$ be such that $\eta_n\leqslant\tfrac{\nu_n}{L_n}$ and $\eta_n<1$,
\item $p^{(n)}$ be a $\eta_n$-relative-approximation of $p$,
\item $t^{(n)}\in\Q$ be such that $t^{(n)}\leqslant b$ and
\[b-t^{(n)}\leqslant\frac{\varepsilon}{2k\sigmap{p}\max(1,\infnorm{y_0}+L_n)^k}.\]
\end{itemize}
Finally define the sequence
\[x_n=\mathcal{B}(a,y_0^{(n)},p^{(n)},t^{(n)},\varepsilon,L_n)\]
and let $y^{(n)}$ be the maximal solution of
\[y^{(n)}(a)=y_0^{(n)}\qquad {y^{(n)}}'=p^{(n)}(y^{(n)}).\]
Note that by the Cauchy-Lipschitz theorem, we know such a solution exists but
it may not exists over $[a,b]$. Note, and this is a consequence of Lemma~\ref{lem:solving_pivp_hint_len},
that we can safely apply $\mathcal{B}$ to a system even if we don't know that its
solution exists over $[a,b]$.

First, we claim that if $L_n\geqslant\ell_{y}(a,b)$ then $y^{(n)}$ exists
over $[a,t^{(n)}]$ and $\infnorm{y(u)-y^{(n)}(u)}\leqslant\varepsilon$ for
all $u\in[a,t^{(n)}]$. Indeed,
assume that $L_n\geqslant\ell_{y}(a,b)$. Then
\[L_n\geqslant\ell_{y}(a,b)\geqslant\ell_{y}(a,t^{(n)}).\]
Let
\[\mu_\varepsilon(t)=\left(\infnorm{y_0^{(n)}-y_0}+\eta_n\ell_y(a,t)\right)
\exp\left((1+\eta_n)\IntI_y(a,t,\varepsilon)\right).\]
Apply Lemma~13 (Relationship between Int and Len) in \cite{PoulyG16} to get that
\[\IntI_y(a,t,\varepsilon)\leqslant2k\ell_y(a,t).\]
It follows that
\begin{align*}
    \mu_\varepsilon(t^{(n)})
        &\leqslant\left(\infnorm{y_0^{(n)}-y_0}+\eta_n\ell_y(a,t^{(n)})\right)
            \exp\left((1+\eta_n)2k\ell_y(a,t^{(n)})\right)\\
        &\leqslant\left(\nu_n+\eta_nL_n\right)\exp\left((1+\eta_n)2kL_n\right)\\
        &\leqslant2\nu_n\exp\left(4kL_n\right)\\
        &<\varepsilon.
\end{align*}
Apply Proposition~\ref{prop:pivp_divergence} to get that $y^{(n)}$ exists
over $[a,t^{(n)}]$. For all $u\in[a,t^{(n)}]$, note that $\mu_\varepsilon(u)\leqslant\mu_\varepsilon(t^{(n)})<\varepsilon$
and apply Proposition~\ref{prop:pivp_divergence} again over $[a,u]$ to get that
\[\infnorm{y(u)-y^{(n)}(u)}\leqslant\varepsilon.\]

Second, we claim that if $x_n\neq\bot$ then $\infnorm{x_n-y(b)}\leqslant e^{-\mu}$.
Indeed, by Lemma~\ref{lem:solving_pivp_hint_len}, if $x_n\neq\bot$ then it must be the
case that
\[L_n\geqslant\ell_{y}(a,b).\]
Apply the first claim to get that $y^{(n)}$ exists over $[a,t^{(n)}]$ and that
\[\infnorm{y(t^{(n)})-y^{(n)}(t^{(n)})}\leqslant\varepsilon.\]
Apply Lemma~\ref{lem:solving_pivp_hint_len} to get that
\[\infnorm{x_n-y^{(n)}(t^{(n)})}\leqslant\varepsilon.\]
It remains to see the relationship between $y(b)$ and $y(t^{(n)})$.
Recall that
\[\infnorm{y(t^{(n)})}\leqslant\infnorm{y_0}+\ell_y(a,t^{(n)})\leqslant\infnorm{y_0}+L_n.\]
Let $M=(k-1)\sigmap{p}\alpha^{k-1}$ and $\alpha=\max(1,\infnorm{y(t^{(n)})})$.
Note that $\alpha\leqslant\max(1,\infnorm{y_0}+L_n)$.
It follows by definition of $t^{(n)}$ that
\begin{align*}
    M|t-t^{(n)}|
        &=(k-1)\sigmap{p}\alpha^{k-1}|t-t^{(n)}|\\
        &\leqslant k\sigmap{p}\max(1,\infnorm{y_0}+L_n)^{k-1}|t-t^{(n)}|\\
        &\leqslant\frac{\varepsilon}{2\max(1,\infnorm{y_0}+L_n)}\\
        &\leqslant\frac{1}{2}<1.
\end{align*}
Thus we can apply Proposition~\ref{prop:growth_pivp} to $y$ with $a=t^{(n)}$ to get that
\[\infnorm{y(b)-y(t^{(n)})}\leqslant\frac{\alpha M|b-t^{(n)}|}{1-M|b-t^{(n)}|}.\]
Consequently
\begin{align*}
\infnorm{y(b)-y(t^{(n)})}
    &\leqslant\frac{\alpha M|t-t^{(n)}|}{1-M|t-t^{(n)}|}\\
    &\leqslant\frac{\alpha \frac{\varepsilon}{2\max(1,\infnorm{y_0}+L_n)}}{1-1/2}\\
    &\leqslant\varepsilon.
\end{align*}
Putting everything together, we get that
\[\infnorm{x_n-y(b)}\leqslant3\varepsilon\leqslant e^{-\mu}.\]

Third, we claim that if $L_n\geqslant 48(k+1)\ell_y(a,b)$ then $x_n\neq\bot$. Indeed, assume
that this is the case. Then in particular $L_n\geqslant\ell_y(a,b)$ so by the first fact,
$y^{(n)}$ exists over $[a,t^{(n)}]$ and for all $t\in[a,t^{(n)}]$ we have
\[\infnorm{y(t)-y^{(n)}(t)}\leqslant\varepsilon.\]
It follows from this that
\begin{align*}
\ell_{y^{(n)}}(a,t^{(n)})
    &=\int_a^{t^{(n)}}\sigmap{p^{(n)}}\max\left(1,\infnorm{y^{(n)}(u)}\right)^kdu\\
    &\leqslant\int_a^{t^{(n)}}(1+\eta_n)\sigmap{p}\max(1,\infnorm{y(u)}+\varepsilon)^kdu\\
    &\leqslant(1+\eta_n)(1+\varepsilon)^k\int_a^{t^{(n)}}\sigmap{p}\max(1,\infnorm{y(u)})^kdu\\
    &\leqslant2(1+\tfrac{1}{4k})^k\ell_y(a,t^{(n)})\\
    &\leqslant4\ell_y(a,b).
\end{align*}
Thus
\[L_n\geqslant48k\ell_y(a,b)\geqslant 12(k+1)\ell_{y^{(n)}}(a,t^{(n)})\]
and by Lemma~\ref{lem:solving_pivp_hint_len}, $x_n\neq\bot$.

Now consider the algorithm that computes the sequence $(x_n)_n$ and returns
the first $x_n\neq\bot$. Thanks to the second claim, this algorithm is correct
because if $x_n\neq\bot$ then $\infnorm{x_n-y(b)}\leqslant\varepsilon$. Furthermore
this algorithm terminates. Indeed, let $N$ be the smallest integer such that
\[L_N\geqslant 48(k+1)\ell_y(a,b).\]
It exists because $L_n\rightarrow+\infty$ as $n\rightarrow+\infty$.
Then $x_N\neq\bot$ and thus the algorithm terminates. Finally, we claim this algorithm
has the right complexity. Indeed, let $n_0$ be the first $n$ such that $x_{n_0}\neq\bot$.
By construction, $n_0\leqslant N$ and the algorithm computes $x_1,x_2,\ldots,x_{n_0}$
and returns. By Lemma~\ref{lem:solving_pivp_hint_len},
the complexity of computing $x_n$ for $n<n_0$ is bounded by
\[\poly\big(k,L_n,\log\infnorm{y_0},\log\sigmap{p},-\log\varepsilon\big)^d\]
since $x_n=\bot$. Furthermore, the complexity of computing $x_{n_0}$ is bounded
by
\[\poly\big(k,L_{n_0},\log\ell_y(a,b),\log\infnorm{y_0},\log\sigmap{p},-\log\varepsilon\big)^d.\]
Since $n_0\leqslant N$, it follows that $L_n\leqslant L_N$ for all $n\leqslant n_0$ and
thus the total complexity is bounded by
\[\sum_{n=1}^{n_0}\poly\big(k,L_N,\log\ell_y(a,b),\log\infnorm{y_0},\log\sigmap{p},-\log\varepsilon\big)^d.\]
Furthermore, since $L_n=n$ and $N$ is the smallest integer such that $L_N\geqslant 48(k+1)\ell_y(a,b)$,
it must be the case that
\[L_N<49(k+1)\ell_y(a,b)\]
and thus that
\[n_0\leqslant N<49(k+1)\ell_y(a,b).\]
Putting everything together, we get that the total complexity is bounded by
\begin{align*}
    \sum_{n=1}^{n_0}&\poly\big(k,96(k+1)\ell_y(a,b),\log\ell_y(a,b),\log\infnorm{y_0},\log\sigmap{p},-\log\varepsilon\big)^d\\
        &\leqslant\sum_{n=1}^{n_0}\poly\big(k,\ell_y(a,b),\log\infnorm{y_0},\log\sigmap{p},\mu+\ln3\big)^d\\
        &\leqslant n_0\poly\big(k,\ell_y(a,b),\log\infnorm{y_0},\log\sigmap{p},\mu\big)^d\\
        &\leqslant 49(k+1)\ell_y(a,b)\poly\big(k,\ell_y(a,b),\log\infnorm{y_0},\log\sigmap{p},\mu\big)^d\\
        &\leqslant \poly\big(k,\ell_y(a,b),\log\infnorm{y_0},\log\sigmap{p},\mu\big)^d.
\end{align*}
\end{proof}

Finally, we would like to remind the reader that the existence of a solution $y$
of a PIVP up to a given time is undecidable, see \cite{GBC07} more details. This explains
why, in the previous theore, we have so assume the existence of the solution if
we want to have any hope of computing it.

\bibliographystyle{alpha}
\bibliography{extracted}

\end{document}